\newcommand{\Real}{\mathbb R}
\newcommand{\R}{\mathbb{R}}
\newtheorem{Proposition}{Proposition}
\newtheorem{Lemma}{Lemma}
\newtheorem{Remark}{Remark}
\newtheorem{Definition}{Definition}
\newtheorem{Example}{Example}
\title{Dispersive behavior in Galactic Dynamics}
\author{Simone Calogero, Juan Calvo,\\ \'Oscar S\'anchez \& Juan Soler\\[1cm]
Departamento de Matem\'atica Aplicada,\\
Facultad de Ciencias, Universidad de Granada,\\
18071 Granada, Spain\footnote{{\tt E-mail: calogero@ugr.es, juancalvo@ugr.es, ossanche@ugr.es, jsoler@ugr.es}}} 
\date{}
\begin{document}
\maketitle
\begin{abstract}{The purpose of this paper is to study the relations between different concepts of dispersive solution for the Vlasov-Poisson system in the gravitational case. Moreover we give necessary conditions for the existence of partially and totally dispersive solutions and a sufficient condition for the occurence of statistical dispersion. These conditions take the form of inequalities involving the energy, the mass and the momentum of the solution. Examples of dispersive and non-dispersive solutions---steady states, periodic solutions and virialized solutions---are also considered.}
\end{abstract}

\noindent{\small\bf AMS classification (2000). }{\scriptsize Primary: 35B05, 35B40, 82B40, 82C40}

\section{Introduction}\label{introduction}


Gravitational systems composed by a large number of classical particles in which collisions and external forces are negligible, {\it e.g.} the stars of a galaxy, can be described by the Vlasov-Poisson system. The regularity of solutions to this system has been extensively studied in the mathematical literature and this problem is by now well-understood, cf. \cite{LP, Pf, Sch}. On the contrary, very little is known on the time asymptotics of the Vlasov-Poisson system.

The large time behavior of solutions to the Vlasov-Poisson system is relatively simple in the case of electrostatic interaction among the particles, which is obtained from the gravitational case by reversing the sign in the right hand side of the field (Poisson) equation. In the electrostastic case all solutions exhibit a (strong or $L^q$-norm) dispersive character~\cite{IR,Per}. 
In the gravitational case the dynamics is more intricate: There exist (stable or unstable) steady states, periodic solutions (breathers) and (fully or partially) dispersive solutions. 
For the applications in Astrophysics it would be desirable to have a classification of the possible asymptotic behavior of solutions in terms of relations between quantities preserved by the evolution (such us the energy and the mass). This is clearly a very difficult---may be impossible---task, but in this paper we show that partial answers in this direction can be given for the Vlasov-Poisson system. We shall focus in particular on solutions that for large times exhibit some sort of dispersive behavior. The examples considered in the following sections show that solutions of the Vlasov-Poisson system may disperse in several different ways. Another goal of the present paper is to study the relations between these various concepts of dispersion.  

Let us recall some fundamental facts about the Vlasov-Poisson system in the gravitational (galactic dynamics) case (more details can be found in \cite{ReinLibro}). The dynamics of the stars of the galaxy is described by the distribution function in phase space $f=f(t,x,p)$, which gives the probability density to find a particle (star) at time $t\in\R$ in the position $x\in\R^3$ with momentum $p\in\R^3$. The mass density $\rho=\rho(t,x)$ of the galaxy is given by
\begin{equation}\label{rhoVP}
\rho=\int_{\R^3} f\,dp\:.
\end{equation}
For notational simplicity we assume that the stars have all the same mass $m$ and fixed units such that $m=4\pi G=1$, where $G$ is Newton's gravitational constant.
The gravitational potential $U=U(t,x)$ generated by the galaxy solves the Poisson equation
\begin{equation}\label{poisson}
\Delta_x U=\rho\:,\quad\lim_{|x|\to\infty}U=0\:,\ \forall\,t\in\R\:,
\end{equation}
where the boundary condition  at infinity means that the galaxy is isolated. The assumption that the stars interact only by gravity leads to the Vlasov equation:
\begin{equation}\label{vlasovVP}
\partial_t f +p\cdot\nabla_x f -\nabla _xU\cdot\nabla_p f=0\:.
\end{equation}
The system~\eqref{rhoVP}--\eqref{vlasovVP} is the Vlasov-Poisson system. The solution of~\eqref{poisson} is given by the formula 
\begin{equation}\label{potential}
U=-\frac{1}{4\pi}\int_{\R^3}\frac{\rho(t,y)}{|x-y|}\,dy\:, 
\end{equation}
whence the Vlasov-Poisson system is equivalent to the non-linear Vlasov equation obtained by replacing the formula for $U$ in~\eqref{vlasovVP}. The energy $E$ and the mass $M$ of a solution are given by 
\begin{equation}\label{energymassVP}
E=\frac{1}{2}\int_{\R^6}|p|^2f\,dp\,dx-\frac{1}{2}\int_{\R^3}|\nabla_x U|^2dx\:,\quad M=\int_{\R^6}f\,dp\,dx
\end{equation}
and are conserved quantities. Likewise, the total linear momentum $Q$ and angular momentum $L$,
\begin{equation}
Q=\int_{\R^6} p\,f\,dp\,dx\:,\qquad L=\int_{\R^6}x\wedge p\,f\,dp\,dx\:,
\end{equation}
are conserved quantities. Moreover, since the characteristic flow of the Vlasov equation preserves the Lebesgue measure, then all $L^q$ norms of $f$ are preserved:
\begin{equation}\label{consLq}
\|f(t)\|_{L^q}=\mathrm{const.}\:,\quad\text{for all }1\leq q\leq\infty\:.
\end{equation}
Hereafter we denote by $L^q$ either $L^q(\R^6)$ or $L^q(\R^3)$, depending on the function under consideration, {\it e.g.} $f(t) \equiv f(t,\cdot,\cdot)$ or $\rho(t) \equiv \rho(t,\cdot)$. 
The invariance of Vlasov-Poisson by (time dependent) Galilean transformations is the property that, given $u\in\R^3$ and the transformation of coordinates
\begin{equation}\label{galileo}
\mathcal{G}_u:\quad t'=t\:,\quad x'=x-ut\:,\quad p'=p-u\:,
\end{equation}
then $f_u(t,x,p)=f(t',x',p')$ and $U_u(t,x)=U(t',x')$ solve the system~\eqref{rhoVP}--\eqref{vlasovVP} if and only if $(f,U)$ does. Note that $Q$ can be made to vanish with a suitable Galilean transformation; the resulting reference frame is at rest with respect to the center of mass of the distribution, which is defined as
\begin{equation}\label{centerofmass}
c_\rho(t)=M^{-1}\int_{\R^3} x\,\rho\,dx\:.
\end{equation} 
Throughout the paper we assume that $f$ is a non-trivial global classical solution of the Vlasov-Poisson system such that, at any fixed time $t$, $f$ has compact support in the variables $(x,p)$ (however this assumption can be substituted by suitable decay conditions on the variables $(x,p)$ or by requiring only that $f$ has bounded moments in these variables up to a sufficientely high order).  We shall refer to these solutions as {\it regular solutions}. It is well known that for any initial datum $0\leq f^0=f_{|t=0}\in C^1_c(\R^6)$, there exists a unique global regular solution of the Vlasov-Poisson system, see \cite{LP, Pf, Sch}.

An important role in our discussion is played by spherically symmetric solutions of the Vlasov-Poisson system. A solution of Vlasov-Poisson is spherically symmetric if $f(t,Ax,Ap)=f(t,x,p)$ for all rotations $A\in\mathrm{SO(3)}$.  The potential induced by a spherically symmetric solution is a function of the radial variable $r=|x|$ only and indeed we have the representation formula
\begin{equation}\label{drU}
\partial_r U=\frac{1}{r^2}\int_0^r\lambda^2\rho(t,\lambda)\,d\lambda\:.
\end{equation}
Clearly, the center of mass of spherically symmetric solutions is at $r=0$.

This paper is organized as follows. In Section~\ref{dispdef} we introduce several concepts of dispersion for a mass distribution---not necessarily originated by the Vlasov-Poisson system and give some examples. In Section~\ref{applVP} we specialize to the case of the Vlasov-Poisson system. We give necessary or sufficient conditions for the existence of various types of dispersive solution. For completeness we also briefly discuss in Section~\ref{nondispersive} two examples of non-dispersive solutions, namely periodic solutions and steady states. 

To conclude this Introduction we remark that the Vlasov-Poisson system ceases to be valid as a physical model when the stars move with large velocities (of the order of the speed of light) or in the presence of very massive galaxies, since then relativistic effects become important. Typical relativistic effects are the redshift of the luminous signals emitted by the galaxy and the formation of black holes. The model which is currentely believed to represent the physically correct relativistic generalization of the Vlasov-Poisson system is the Einstein-Vlasov system~\cite{Hakan}, where Poisson's equation is substituted by Einstein's equations of General Relativity. 
As compared to the Vlasov-Poisson system, the Einstein-Vlasov system is far more complicated and less understood. 
Another relativistic generalization of Vlasov-Poisson is the Nordstr\"om-Vlasov system~\cite{Calogero}, which, although not physically correct, is mathematically interesting since it already captures some of the technical and conceptual new difficulties that are encountered when studying a relativistic ({\it i.e.}, Lorentz invariant) system. The large time behavior of the relativistic models will be discussed elsewhere~\cite{unp}.


\section{Dispersive behavior}\label{dispdef}
In this section we introduce several concepts of dispersion for regular mass distributions. By a {\it regular mass distribution} of total mass $M$ we mean a non-negative $C^1$ function $\rho(t,x)$ such that $\rho(t,\cdot)$ has compact support  and  $\|\rho(t)\|_1=M$ (independent of time $t$). This terminology is consistent with the one used for solutions of the Vlasov-Poisson system: The mass distribution $\rho$ defined by~\eqref{rhoVP} is regular whenever $f$ is regular. 
\subsection{Strong dispersion}
\begin{Definition}
A regular mass distribution $\rho$ is said to be \textnormal{strongly} dispersive if there exists $q>1$ such that the limit
\begin{equation}\label{strong}
\lim_{t\to\infty}\|\rho(t)\|_{L^q}\ \text{exists and is zero}\:.
\end{equation}
\end{Definition}
Obviously, strong dispersion is a Galilean invariant concept. For the Vlasov-Poisson system in the plasma physics case, which is obtained from~\eqref{rhoVP}--\eqref{vlasovVP} by reversing the sign in the right hand side of~\eqref{poisson}, it was proved in \cite{IR, Per} (see also \cite{GS}) that {\it all} solutions are strongly dispersive, with~\eqref{strong} being verified for $q\in (1,5/3]$. In the gravitational case, examples of strongly dispersive solutions are those constructed in \cite{Bardos-Degond} for small initial data, see also \cite{Hwang}. For these solutions there holds the estimate
\[
\rho(t,x)\leq C(1+|t|+|x|)^{-3}\:,
\]
for a positive constant $C$, which clearly implies strong dispersion. 

\subsection{Total and partial dispersion}
The next types of dispersive solution that we are going to discuss use the notion of `concentration function of a measure' introduced by L\'evy~\cite{Levy} and applied by P.-L. Lions in the proof of the concentration-compactness Lemma~\cite{Co-Co}. 
\begin{Definition}\label{dispersion}
A regular mass distribution $\rho$ is said to be \textnormal{totally}, respectively \textnormal{partially} dispersive, if and only the limit
\begin{equation}\label{MR}
\mathcal{M}(R)=\lim_{t\to\infty}\sup_{x_0\in\R^3}\int_{|x-x_0|<R}\rho(t,x)\,dx\:,
\end{equation}
exists and
\begin{equation}\label{massloss}
\mathcal{M}_\infty=\lim_{R\to\infty}\mathcal{M}(R)
\end{equation}
satisfies $\mathcal{M}_\infty=0$, respectively $\mathcal{M}_\infty\in (0,M)$.

\end{Definition}

\begin{Remark}\textnormal{
Of course, it is possible that $\mathcal{M}(R)$ could not be well defined for all $R$ ({\it e.g.} when $\rho(t)$ is time periodic). Whenever it exists, $\mathcal{M}(R)$ is a bounded non-decreasing function and therefore the limit~\eqref{massloss} is well defined. Moreover
$\mathcal{M}_\infty\in [0,M]$.} 
\end{Remark}

It is clear that strong dispersion implies total dispersion. Moreover, 
total dispersion is equivalent to  the vanishing property in the concentration-compactness theory, see~\cite{Co-Co}; precisely, a mass distribution $\rho$ is totally dispersive if and only if the limit
\begin{eqnarray}\label{mgs}
\lim_{t \rightarrow \infty}\sup_{x_0\in\R^3}  \int_{|x -x_0|<R} \rho(t,x) \,dx\ \textnormal{exists and is zero, }\forall\,R>0\:.
\end{eqnarray}
An important physical property of~\eqref{mgs} is that it is invariant by Galilean transformations, unlike the decay of the mass (or energy) over a ball with arbitrary radius,  
\begin{eqnarray}\label{localmassdecay}
\int_{|x|\leq R} \rho \,dx \to 0\:,\quad\text{as }\ t\to\infty\:,\quad\forall R>0\:,
\end{eqnarray}
which has also been used as definition of dispersion for evolution type equations (including non-linear Vlasov equations), see~\cite{GS, St} for instance. Thus for example, according to our definition of total dispersion, a static ({\it i.e.}, time independent) solution which is ``put in motion" by a Galilean transformation is not to be regarded as a dispersive solution (whereas it would be so according to~\eqref{localmassdecay}).

\begin{Example}\label{shellexample}\textnormal{
As an example of totally dispersive solution of the Vlasov-Poisson system, consider a spherically symmetric shell of matter with internal radius $R_1(t)$ and---possibly infinite---external radius $R_2(t)$ (this example was first introduced in \cite{Andreasson2008}). Let $r=|x|$ and $w=x\cdot p/r$, the radial momentum variable. Now suppose that in the support of 
$f^0=f(0,x,p)$,
it is verified that 
\begin{equation}\label{initialw}
\inf\{w,w\in\textrm{ supp } f^0\}>\sqrt{\frac{M}{2\pi R_1(0)}}\:,
\end{equation}
where $M$ is the total mass, {\it i.e.}, initially the particles are moving outwardly with sufficiently high speed. Using that in spherical symmetry the maximal force experienced by a particle is bounded by $M/4\pi r^2$, see~\eqref{drU}, we find that, along the characteristics of the Vlasov equation,
\[
\frac{d}{dt}\left(\frac{1}{2}w^2-\frac{M}{4 \pi r}\right)=w\,\dot{w}+\frac{M}{4 \pi r^2}\,\dot{r}=w\left(\dot{w}+\frac{M}{4 \pi r^2}\right),
\]
which is positive in the time interval $[0,T)$ in which $w>0$, {\it i.e.}, as long as the shell keeps moving outwardly. It follows that 
\[
w(t)^2>w(0)^2-\frac{M}{2\pi r(0)}>\inf_{\textrm{supp}\ f^0}w^2-\frac{M}{2 \pi R_1(0)}:=W>0\:,
\]
where for the last inequality we use~\eqref{initialw}. This implies that $T=\infty$, that is, the shell moves outwardly for all future times. Moreover, $W>0$ is a uniform lower bound on the radial momentum, which entails
\begin{equation}\label{Rincrease}
R_2(t)>R_1(t)>R_1(0)+Wt\:.
\end{equation}
We claim that, because of~\eqref{Rincrease}, the solution under consideration is totally dispersive. 
We shall achieve this by proving that the potential energy vanishes in the limit $t\to\infty$, which for a solution of Vlasov-Poisson is equivalent to total dispersion, see Proposition~\ref{fullnec} in the next section. Thanks to the rotational symmetry we have the representation~\eqref{drU},
which allows to estimate the potential energy as
\begin{align*}
E_{\mathrm{pot}}(t)&=\frac{1}{2}\int_{\R^3}|\nabla_x U|^2dx=\frac{1}{2} \int_{\R^3 } \left( \frac{1}{r^2} \int_0^r \lambda^2 \rho(t,\lambda) \ d\lambda \right)^2 \ dx\\
& = 2 \pi \int_{R_1(t)}^{R_2(t)} \frac{1}{r^2} \left(\int_0^r \lambda^2 \rho(t,\lambda) \ d\lambda \right)^2 \ dr\\
&\leq\frac{1}{8 \pi} \int_{R_1(t)}^{R_2(t)} \frac{M^2}{r^2}\ dr \le \frac{M^2}{8 \pi R_1(t)}
\end{align*}
and the claim follows.}
\end{Example}

For the next result we denote by $d(A,B)$ the distance of the sets $A,B\subset\R^3$:
\[
d(A,B)=\inf\{|x-y|\:, x\in A\;,y\in B\}\:.
\]
\begin{Lemma}\label{dichotomy} Let $\rho$ be a partially dispersive regular mass distribution. Then, for any given $\varepsilon >0$ there exist  $t_n\xrightarrow{n}\infty$ and two sequences of non-negative $L^1$ functions $\rho_1^n,\rho_2^n:\R^3\to\R_+$,  such that $\rho(t_n)\geq \rho_1^n+\rho_2^n$ and
\begin{itemize}
\item[a)] $\|\rho(t_n)-(\rho_1^n+\rho_2^n)\|_1 \leq \varepsilon$;
\item[b)] $\left|\|\rho_1^n\|_1 - \mathcal{M}_\infty \right| \leq \varepsilon \:,\left| \|\rho_2^n\|_1 - (M-\mathcal{M}_\infty)\right|   \leq \varepsilon$;
\item[c)] $d(\mathrm{supp}\rho_1^n,\mathrm{supp}\rho_2^n)\to\infty$, as $n\to\infty$.
\item[d)] There exists a sequence of vectors $y^n\in\R^3$ and $0<R^*_{(\varepsilon)}$ such that $\rho_1^n=0$, for $|x-y^n|> R^*_{(\varepsilon)}$.
\end{itemize}
\end{Lemma}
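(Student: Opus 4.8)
The plan is to run the ``dichotomy'' branch of P.-L. Lions's concentration--compactness argument, adapted to the time-dependent setting and with the extra bookkeeping needed for conclusions (a)--(d). Throughout write $\alpha=\mathcal{M}_\infty\in(0,M)$ and fix $\varepsilon>0$; since only an $\varepsilon$-accurate statement is claimed, it is harmless to prove every estimate with $\varepsilon$ replaced by $\varepsilon/4$ and to collect the losses at the end.

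First I would fix the inner scale. Because $R\mapsto\mathcal{M}(R)$ is non-decreasing, bounded by $M$, and increases to $\alpha$ (see the Remark after Definition~\ref{dispersion}), I choose once and for all a radius $R^*=R^*_{(\varepsilon)}$ with $\alpha-\varepsilon/4<\mathcal{M}(R^*)\le\alpha$; this is the radius appearing in (d). Next I would pick the centers: for each $t$ the map $x_0\mapsto\int_{|x-x_0|<R^*}\rho(t,x)\,dx$ is continuous (dominated convergence, the spheres being null sets) and vanishes as $|x_0|\to\infty$ because $\rho(t,\cdot)$ has compact support, so it attains its maximum at some $y(t)\in\R^3$; by the very definition of $\mathcal{M}(R^*)$ as a limit, $m_t:=\int_{|x-y(t)|<R^*}\rho(t,x)\,dx=\sup_{x_0}\int_{|x-x_0|<R^*}\rho(t,x)\,dx\to\mathcal{M}(R^*)$ as $t\to\infty$.

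The heart of the matter is the annulus estimate: for every fixed $R>R^*$,
\[
\limsup_{t\to\infty}\int_{R^*\le|x-y(t)|<R}\rho(t,x)\,dx\le\varepsilon/4 .
\]
Indeed $\int_{|x-y(t)|<R}\rho(t,x)\,dx\le\sup_{x_0}\int_{|x-x_0|<R}\rho(t,x)\,dx\to\mathcal{M}(R)\le\alpha$, while $\int_{|x-y(t)|<R^*}\rho(t,x)\,dx=m_t\to\mathcal{M}(R^*)>\alpha-\varepsilon/4$; subtracting and using $\limsup(a_t-b_t)\le\limsup a_t-\liminf b_t$ yields the bound. With this in hand I would select the times and the outer scales by a diagonal procedure: for each integer $k$ there is $T_k$ with, for all $t\ge T_k$, both $\int_{R^*\le|x-y(t)|<R^*+k}\rho(t,x)\,dx<\varepsilon/2$ and $|m_t-\alpha|<\varepsilon/2$; arranging $T_k\uparrow\infty$, set $t_n=\max(T_n,n)$, $y^n=y(t_n)$, $R_n=R^*+n$, and define $\rho_1^n=\rho(t_n)\mathbf{1}_{\{|x-y^n|<R^*\}}$ and $\rho_2^n=\rho(t_n)\mathbf{1}_{\{|x-y^n|\ge R_n\}}$.

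It then remains to read off (a)--(d). The two indicator sets are disjoint and $R_n>R^*$, so $\rho_1^n+\rho_2^n\le\rho(t_n)$, and $\rho(t_n)-\rho_1^n-\rho_2^n$ is supported in the annulus $\{R^*\le|x-y^n|<R_n\}$, which makes (a) exactly the first diagonal estimate; for (b) one uses $\|\rho_1^n\|_1=m_{t_n}$ and $\|\rho_2^n\|_1=M-m_{t_n}-\int_{R^*\le|x-y^n|<R_n}\rho(t_n)$ (recall $\|\rho(t)\|_1=M$), the two diagonal estimates absorbing the single factor-of-two into the final $\varepsilon$; (c) holds since any $x_1\in\mathrm{supp}\,\rho_1^n$, $x_2\in\mathrm{supp}\,\rho_2^n$ satisfy $|x_1-x_2|\ge|x_2-y^n|-|x_1-y^n|\ge R_n-R^*=n$; and (d) is immediate with $R^*_{(\varepsilon)}=R^*$ and the vectors $y^n$. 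The only genuinely delicate point is the annulus estimate together with the choice of a \emph{common} center $y(t)$: one must ensure that the upper bound $\mathcal{M}(R)$ for the larger ball and the lower bound $\mathcal{M}(R^*)$ for the inner ball are simultaneously valid at that same center, which works precisely because $y(t)$ maximizes the inner-ball mass and $\mathcal{M}$ is monotone in $R$. Everything else is routine bookkeeping.
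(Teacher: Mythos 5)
Your proposal is correct and follows essentially the same route as the paper's proof: the standard Lions dichotomy cut, taking $\rho_1^n$ to be $\rho(t_n)$ restricted to a ball of fixed radius $R^*$ about a (near-)maximizing center $y^n$ and $\rho_2^n$ its restriction to the complement of a ball of growing radius $R_n$, with the mass in the intermediate annulus controlled by the convergence $\mathcal{M}(R)\to\mathcal{M}_\infty$. Your version merely makes the annulus estimate and the diagonal selection of $(t_n,R_n)$ more explicit than the paper does.
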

\begin{Remark}\textnormal{
The conditions a)-c) define the {\it dichotomy} property of the mass distribution $\rho$ in the concentration-compactness Lemma, see~\cite{Co-Co}. Condition $d)$ is also consequence of the same result, although this was not pointed out in~\cite{Co-Co}; for the sake of completeness we shall give here the proof of Lemma~\ref{dichotomy}. In our context, the relevance of the extra condition $d)$ arises from the fact that it prevents the system from being strongly dispersive, as we will see in Section \ref{kurthsec}.}
\label{rem2}
 \end{Remark}
\begin{proof}
The following proof is adapted from~\cite{Co-Co}. Owing to (\ref{mgs}), along any sequence $t_n\xrightarrow{n}\infty$ we have
\[
\lim_{n\to\infty}\sup_{x_0\in\R^3}\int_{|x-x_0|<R}\rho(t_n,x)\,dx=\mathcal{M}(R)\:.
\]
Since $\mathcal{M}(R)\to\mathcal{M}_\infty\in (0,M)$,  for all $\varepsilon>0$ we can find $R^*=R_{(\varepsilon)}^*$ such that, for all $n$ sufficiently large, 
\[
\sup_{x_0\in\R^3}\int_{|x-x_0|<R*}\rho(t_n,x)\,dx\in (\mathcal{M}_\infty-\varepsilon,\mathcal{M}_\infty+\varepsilon)\:.
\]
Moreover there exists $y^n\in\R^3$ such that
\[
\int_{|x-y^n|<R^*}\rho(t_n,x)\,dx\in (\mathcal{M}_\infty-\varepsilon,\mathcal{M}_\infty+\varepsilon)\:.
\]
Finally, we can find a sequence $R_n\xrightarrow{n}\infty$ and a subsequence of times---still denoted $t_n$---such that
\[
\sup_{x_0\in\R^3}\int_{|x-x_0|<R_n}\rho(t_n,x)\,dx\in (\mathcal{M}_\infty-\varepsilon,\mathcal{M}_\infty+\varepsilon)\:;
\]
The functions $\rho_1^n=\rho(t_n)\chi_{\{B_{y^n}(R*)\}}$ and  $\rho_2^n=\rho(t_n)\chi_{\{\R^3\setminus B_{y^n}(R_n)\}}$ are easily seen to satisfy the properties a)-d).
\end{proof}


\subsection{Statistical dispersion}
We shall now discuss another Galilean invariant concept of dispersion, which was introduced in \cite{DSS}. Define the statistical dispersion operator in space by
$$
\langle (\Delta x)^2 \rangle \::=\frac{1}{M}\left[ \int_{\Real^3} |x|^2 \rho(t,x) \; dx - \frac{1}{M}\left(\int_{\Real^3} x \rho(t,x) \; dx\right)^2\right]\:.
$$
Up to a mass normalization, the statistical dispersion operator 
coincides with the statistical variance of the density mass function
and, consequently, it is a measure of the dispersion of such 
distribution. Note also that
\[
\langle (\Delta x)^2 \rangle=\int_{\R^3}\left|x-c_\rho(t)\right|^2\,\frac{\rho}{M}\,dx\:
\]
and therefore the statistical dispersion operator coincides with the moment of inertia of the mass distribution with respect to the center of mass. 
\begin{Definition}
A regular mass distribution $\rho$ is said to be \textnormal{statistically} dispersive if and only if 
\[
\sup_{t>0}\langle(\Delta x)^2\rangle\ =+\infty\:.
\]
\end{Definition}
\begin{Remark}
\textnormal{The above definition differs slightly from that given in~\cite{DSS}, where statistical dispersion is defined by the condition $\lim_{t\to\infty}\langle(\Delta x)^2\rangle\ =+\infty$.}
\end{Remark}

Statistical dispersion is the weakest concept of dispersion among those introduced so far.
\begin{Proposition}\label{statweak}
If a regular mass distribution is totally or partially dispersive, then it is statistically dispersive. In particular, total dispersion implies that $\lim_{t\to\infty}\langle(\Delta x)^2\rangle =+\infty$.
\end{Proposition}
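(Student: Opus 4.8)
\medskip
\noindent The plan is to reduce the statement to a Chebyshev--type inequality for the moment of inertia. Recall from the discussion preceding the Definition that $\langle(\Delta x)^2\rangle = M^{-1}\int_{\R^3}|x-c_\rho(t)|^2\rho(t,x)\,dx$, and that all the integrals involved are finite because a regular mass distribution has compact support at each fixed time. First I would fix $R>0$, split the integral over $\{|x-c_\rho(t)|<R\}$ and its complement, discard the inner part, and use $|x-c_\rho(t)|^2\geq R^2$ on the outer part to get
\[
M\,\langle(\Delta x)^2\rangle(t)\ \geq\ \int_{|x-c_\rho(t)|\geq R}|x-c_\rho(t)|^2\rho(t,x)\,dx\ \geq\ R^2\Bigl(M-\int_{|x-c_\rho(t)|<R}\rho(t,x)\,dx\Bigr).
\]

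Next I would observe that $c_\rho(t)$ is one admissible choice of the center $x_0$ in \eqref{MR}, so that $\int_{|x-c_\rho(t)|<R}\rho(t,x)\,dx\leq\sup_{x_0\in\R^3}\int_{|x-x_0|<R}\rho(t,x)\,dx$. Letting $t\to\infty$ and invoking the hypothesis that $\rho$ is totally or partially dispersive (so that the limit \eqref{MR} exists) together with the monotonicity of $\mathcal{M}(\cdot)$ recorded in the Remark after Definition~\ref{dispersion}, which yields $\mathcal{M}(R)\leq\mathcal{M}_\infty$, I would obtain
\[
\liminf_{t\to\infty}\langle(\Delta x)^2\rangle(t)\ \geq\ \frac{R^2}{M}\bigl(M-\mathcal{M}_\infty\bigr)\qquad\text{for every }R>0.
\]
Since $\mathcal{M}_\infty<M$ both in the totally dispersive case ($\mathcal{M}_\infty=0$) and in the partially dispersive case ($\mathcal{M}_\infty\in(0,M)$), the right-hand side tends to $+\infty$ as $R\to\infty$; hence $\lim_{t\to\infty}\langle(\Delta x)^2\rangle(t)=+\infty$, and in particular $\sup_{t>0}\langle(\Delta x)^2\rangle=+\infty$. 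This establishes statistical dispersion in both cases and, for totally dispersive distributions, the sharper limiting assertion.

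I do not expect a genuine analytical obstacle here: once the objects are in place this is essentially a one-line Chebyshev estimate. The only point that deserves a line of care is the reduction to the center of mass, namely that the Galilean--invariant quantity $\langle(\Delta x)^2\rangle$ measures precisely the mass escaping every ball centered at $c_\rho(t)$, and that such balls are among those appearing in the concentration function $\mathcal{M}(R)$ — this is what makes $\mathcal{M}(R)$ the right quantity to compare against. I would also note in passing that the same computation recovers, consistently with the implications already established, that strong dispersion forces $\lim_{t\to\infty}\langle(\Delta x)^2\rangle=+\infty$ as well.
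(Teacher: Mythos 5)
Your proof is correct, and for the partially dispersive case it takes a genuinely different route from the paper. For total dispersion the paper argues exactly as you do: a Chebyshev estimate on the ball of radius $R$ centered at $c_\rho(t)$, combined with the characterization~\eqref{mgs}. For partial dispersion, however, the paper does \emph{not} use this estimate; it invokes the dichotomy property of Lemma~\ref{dichotomy}, extracts a sequence $t_n\to\infty$ along which the mass splits into two clumps $\rho_1^n,\rho_2^n$ with $d(\mathrm{supp}\,\rho_1^n,\mathrm{supp}\,\rho_2^n)\to\infty$, and uses the triangle inequality to force $\langle(\Delta x)^2\rangle(t_n)\to\infty$; this only yields $\sup_t\langle(\Delta x)^2\rangle=+\infty$. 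Your unified argument replaces all of that with the single quantitative observation that $\mathcal{M}(R)\le\mathcal{M}_\infty<M$ for every $R$, so that a fixed fraction $M-\mathcal{M}_\infty>0$ of the mass eventually lies outside every ball centered at $c_\rho(t)$. This is shorter, bypasses Lemma~\ref{dichotomy} entirely, and actually proves a \emph{stronger} statement than the Proposition asserts: $\lim_{t\to\infty}\langle(\Delta x)^2\rangle=+\infty$ holds for partially dispersive distributions as well, not merely for totally dispersive ones. What the paper's route buys in exchange is structural information — it exhibits the two escaping mass concentrations explicitly and ties the result to the concentration--compactness dichotomy, which the authors use again in Section~\ref{kurthsec}; your route buys brevity, uniformity of the two cases, and the sharpened conclusion.
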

\begin{proof}
We prove first that total dispersion implies $\lim_{t\to\infty}\langle(\Delta x)^2\rangle =+\infty$. Fix $R>0$ arbitrarily and write
\begin{align*}
M&=\int_{\left|x-c_\rho(t)\right|\leq R}\rho\,dx+\int_{\left|x-c_\rho(t)\right|>R}\rho\,dx\\
&\leq\sup_{x_0\in\R^3}\int_{\left|x-x_0\right|\leq R}\rho\,dx+\int_{\left|x-c_\rho(t)\right|>R}\rho\,dx \:.
\end{align*}
Assume the solution is totally dispersive. Then by~\eqref{mgs} there exists $t_0=t_0(R)$ such that, for all $t>t_0$, 
\[
\sup_{x_0\in\R^3}\int_{\left|x-x_0\right|\leq R}\rho\,dx<\frac{M}{2}\:.
\]
Thus for $t>t_0$,
\[
\langle(\Delta x)^2\rangle\geq \frac{R^2}{M}\int_{\left|x-c_\rho(t)\right|>R}\rho\,dx\geq\frac{R^2}{2}\:,
\]
which yields the claim. To prove that partial dispersion implies statistical dispersion we use the dichotomy property of partially dispersive solutions, see Lemma~\ref{dichotomy}. Let $t_n,\rho_1^n,\rho_2^n$ satisfy the properties of Lemma~\ref{dichotomy}. Then
\begin{align*}
\langle(\Delta x)^2\rangle(t_n)&\geq \int_{\mathrm{supp}\rho_1^n}|x-c_\rho(t_n)|^2\rho_1^n\,dx+\int_{\mathrm{supp}\rho_2^n}|x-c_\rho(t_n)|^2\rho_2^n\,dx\\
&\geq d(c_\rho(t_n),\mathrm{supp}\rho_1^n)^2\|\rho_1^n\|_1+d(c_\rho(t_n),\mathrm{supp}\rho_2^n)^2\|\rho_2^n\|_1\\
&\geq C\left[d(c_\rho(t_n),\mathrm{supp}\rho_1^n)^2+d(c_\rho(t_n),\mathrm{supp}\rho_2^n)^2\right].
\end{align*}
By the triangle inequality,
\[
[\dots]\geq d(\mathrm{supp}\rho_1^n,\mathrm{supp}\rho_2n)^2\to\infty\:,\text{ as }t\to\infty\:,
\]
whence $\langle(\Delta x)^2\rangle(t_n)\xrightarrow{n}\infty$, which concludes the proof.
\end{proof}

\begin{Example}\label{shell-static}\textnormal{We give now an example of solution which is partially (and therefore statistically) dispersive but not strongly dispersive. 
This example is a modification of the fully dispersive shell considered before, in which a static, spherically symmetric configuration with given mass $M_0$ is located in the interior of a shell with mass $m$ (alternatively, the interior part may consist of a static shell \cite{Rein99}, leaving a neighborhood of the origin empty, or a spherically symmetric periodic solution, such as the one found by Kurth~\cite{Kurth}, see also next section). Since the potential inside the shell is constant, the static configuration in the interior will persist as long as the shell is moving outwardly. This again will be verified under condition~\eqref{initialw}, which now reads
$$
\inf\{w,w\in\textrm{ supp } f^0_\mathrm{shell}\}>\sqrt{\frac{M_0+m}{2\pi R_1(0)}}\:.
$$
Then we have 
$$
  (M_0+m) \langle(\Delta x)^2\rangle = \int_{\{R_1(t) >|x| \}} |x|^2 \rho\, dx + \int_{\{R_1(t) \le |x| \}} |x|^2 \rho\, dx \ge R_1(t)^2 m\:.
$$
By~\eqref{Rincrease}, this gives a growth of the spatial variance of order $t^2$. Partial dispersion also follows immediately by (\ref{Rincrease}).}
\end{Example}

\section{Dispersion in the Vlasov-Poisson system}\label{applVP}

Recall that strong dispersion implies total dispersion. Indeed we shall now prove that these two concepts of dispersion are equivalent for the Vlasov-Poisson system. Let
\[
E_{\mathrm{kin}}=\frac{1}{2}\int_{\R^6}|p|^2f\,dp\,dx\:,\quad E_{\mathrm{pot}}=\frac{1}{2}\int_{\R^3}|\nabla_x U|^2dx
\]
and note that
\begin{equation}\label{lowerboundekin}
E_{\mathrm{kin}}-\frac{Q^2}{2M}=\frac{1}{2}\int_{\R^6}\left|p-M^{-1}Q\right|^2f\,dp\,dx> 0\:.
\end{equation}
\begin{Proposition}\label{fullnec}
Let $f$ be a regular solution of the Vlasov-Poisson system. Then the following assertions are equivalent:
\begin{enumerate}
\item $f$ is strongly dispersive.
\item $f$ is totally dispersive.
\item $E_{\mathrm{pot}}\to 0$, as $t\to\infty$.
\end{enumerate}
Moreover, if any of the above holds, then $f$ satisfies the inequality
\[
E\geq \frac{Q^2}{2M}\:.
\]
\end{Proposition}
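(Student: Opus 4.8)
\emph{Strategy.} I would prove the three conditions equivalent by running the cycle $(1)\Rightarrow(2)\Rightarrow(3)\Rightarrow(1)$. The step $(1)\Rightarrow(2)$ is the one already observed before the statement: if $\|\rho(t)\|_{L^q}\to 0$ for some $q>1$, then by H\"older $\sup_{x_0\in\R^3}\int_{|x-x_0|<R}\rho(t,x)\,dx\le |B_R|^{1-1/q}\|\rho(t)\|_{L^q}\to 0$ for every fixed $R$, so $\mathcal M(R)\equiv 0$ and $\mathcal M_\infty=0$. The content is therefore in $(2)\Rightarrow(3)$ and $(3)\Rightarrow(1)$: the latter I plan to get from a pseudo-conformal (virial) identity, the former — which I expect to be the main obstacle — from estimating the potential energy in terms of the concentration function.

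\emph{Two tools.} First, the interpolation estimate: splitting $\rho(t,x)=\int_{|p-a|<P}f\,dp+\int_{|p-a|\ge P}f\,dp$ and optimizing in $P>0$ gives, for any measurable shift $a=a(x)\in\R^3$,
\[
\rho(t,x)\le C\,\|f(t)\|_\infty^{2/5}\Big(\int_{\R^3}|p-a(x)|^2 f(t,x,p)\,dp\Big)^{3/5};
\]
with $a\equiv 0$, integrating in $x$ and using \eqref{consLq}, this yields $\|\rho(t)\|_{L^{5/3}}^{5/3}\le C\|f(0)\|_\infty^{2/3}E_{\mathrm{kin}}(t)$. Second, the virial identity for $I(t):=\int_{\R^6}|x-tp|^2 f\,dx\,dp$: differentiating along the characteristics and using $\Delta_xU=\rho$ (which gives $\int_{\R^3}x\cdot\nabla_xU\,\rho\,dx=E_{\mathrm{pot}}$) together with the conservation of $E$ (which gives $\tfrac{d}{dt}E_{\mathrm{pot}}=\tfrac{d}{dt}E_{\mathrm{kin}}=-\int_{\R^6}p\cdot\nabla_xU\,f\,dx\,dp$), I would obtain
\[
\frac{d}{dt}\big(I(t)-2t^2E_{\mathrm{pot}}(t)\big)=-2t\,E_{\mathrm{pot}}(t)\le 0\,,\qquad\text{hence}\qquad I(t)\le I(0)+2t^2E_{\mathrm{pot}}(t)\quad(t\ge 0).
\]

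\emph{From $(3)$ to $(1)$, and the inequality.} Taking the shift $a(x)=x/t$ in the interpolation estimate, integrating in $x$, and using $\int_{\R^3}|p-x/t|^2 f\,dp=t^{-2}\int_{\R^3}|x-tp|^2 f\,dp$ together with the virial bound gives
\[
\|\rho(t)\|_{L^{5/3}}^{5/3}\le C\,\|f(0)\|_\infty^{2/3}\,t^{-2}I(t)\le C\,\|f(0)\|_\infty^{2/3}\big(t^{-2}I(0)+2E_{\mathrm{pot}}(t)\big)\ \xrightarrow[t\to\infty]{}\ 0\,,
\]
which is strong dispersion with $q=5/3$. For the last assertion: since $E=E_{\mathrm{kin}}(t)-E_{\mathrm{pot}}(t)$ is conserved and $E_{\mathrm{pot}}(t)\to 0$ (by $(3)$, hence by each of the equivalent conditions), we get $E_{\mathrm{kin}}(t)=E+E_{\mathrm{pot}}(t)\to E$; combined with $E_{\mathrm{kin}}(t)>Q^2/(2M)$ for every $t$, which is \eqref{lowerboundekin}, passing to the limit yields $E\ge Q^2/(2M)$.

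\emph{From $(2)$ to $(3)$ — the main obstacle.} Set $V:=-4\pi U\ge 0$, so that $8\pi E_{\mathrm{pot}}(t)=\int_{\R^3}V\rho\,dx$, and let $m(R,t):=\sup_{x_0}\int_{|x-x_0|<R}\rho(t,x)\,dx$. A layer-cake decomposition of $\int_{\R^3}|x-y|^{-1}\rho(t,y)\,dy$ at radius $R$ should give, for every $R>0$ and every $x$,
\[
V(t,x)\le\frac{2M}{R}+\int_0^{R}\frac{m(R',t)}{R'^{2}}\,dR'\,,\qquad\text{hence}\qquad E_{\mathrm{pot}}(t)\le\frac{M^2}{4\pi R}+\frac{M}{8\pi}\int_0^{R}\frac{m(R',t)}{R'^{2}}\,dR'\,.
\]
The delicate point is the behaviour of $R'^{-2}m(R',t)$ near $R'=0$: by H\"older $m(R',t)\le C\|\rho(t)\|_{L^{5/3}}(R')^{6/5}$, so the integrand is $\le C\|\rho(t)\|_{L^{5/3}}(R')^{-4/5}$, locally integrable. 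Plugging this (at $R=1$, say) into the displayed bound and using $\|\rho(t)\|_{L^{5/3}}^{5/3}\le C\|f(0)\|_\infty^{2/3}E_{\mathrm{kin}}(t)=C\|f(0)\|_\infty^{2/3}(E+E_{\mathrm{pot}}(t))$ produces a superlinear inequality of the form $E_{\mathrm{pot}}(t)\le c_1+c_2\big(|E|+E_{\mathrm{pot}}(t)\big)^{3/5}$, which forces $\sup_t E_{\mathrm{pot}}(t)=:\kappa<\infty$ and hence $\sup_t\|\rho(t)\|_{L^{5/3}}=:K<\infty$. With this uniform bound, $(R')^{-2}m(R',t)\le CK(R')^{-4/5}\in L^1(0,R)$ dominates the integrand uniformly in $t$, while total dispersion \eqref{mgs} gives $m(R',t)\to\mathcal M(R')=0$ for each fixed $R'>0$; dominated convergence then yields $\int_0^R(R')^{-2}m(R',t)\,dR'\to 0$, so $\limsup_{t\to\infty}E_{\mathrm{pot}}(t)\le M^2/(4\pi R)$. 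Letting $R\to\infty$ gives $E_{\mathrm{pot}}(t)\to 0$, closing the cycle.
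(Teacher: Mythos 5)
Your argument is correct. The implication $1\Rightarrow 2$, the implication $3\Rightarrow 1$, and the derivation of $E\geq Q^2/(2M)$ coincide with the paper's proof: the paper also combines the interpolation inequality \eqref{lines_interpolation} with the integrated pseudoconformal law \eqref{BPR} (which it quotes from \cite{IR} rather than rederiving, but your sketch of the derivation is the standard one), and obtains the energy inequality by letting $t\to\infty$ in $E>Q^2/(2M)-E_{\mathrm{pot}}$. The genuine difference is in $2\Rightarrow 3$. The paper splits the Coulomb double integral $8\pi E_{\mathrm{pot}}=\iint\rho(t,x)\rho(t,y)|x-y|^{-1}dx\,dy$ into the three regions $|x-y|\leq 1/R$, $1/R<|x-y|\leq R$ and $|x-y|>R$, bounding the first by Young's convolution inequality together with the a priori bound $\|\rho(t)\|_{5/3}\leq C$ (quoted from \cite{ReinLibro}), the second by $MR\sup_{x_0}\int_{|x-x_0|\leq R}\rho\,dx$, and the third by $M^2/R$. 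You instead use the layer-cake identity $|x-y|^{-1}=\int_{|x-y|}^{\infty}s^{-2}\,ds$ to bound the potential pointwise by $\int_0^R s^{-2}m(s,t)\,ds+M/R$, tame the singularity at $s=0$ by H\"older against $\|\rho(t)\|_{5/3}$, and close with dominated convergence; you also rederive the uniform $L^{5/3}$ bound by a bootstrap instead of citing it (note that your inequality $E_{\mathrm{pot}}\leq c_1+c_2\left(|E|+E_{\mathrm{pot}}\right)^{3/5}$ is \emph{sublinear}, not superlinear, in $E_{\mathrm{pot}}$ on the right --- which is precisely why it forces boundedness; this is the same bootstrap the paper uses later in Proposition~\ref{partialsuf} via $E_{\mathrm{pot}}\leq C\sqrt{E_{\mathrm{kin}}}$). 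The two decompositions are morally identical --- small scales controlled by the $L^{5/3}$ norm, intermediate scales by the concentration function, large scales by the total mass --- but yours is self-contained where the paper leans on the standard a priori estimate, at the cost of a somewhat longer argument.
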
 
%

%
%
\begin{proof}
By~\eqref{lowerboundekin}, 
\[
E>\frac{Q^2}{2M}-E_{\mathrm{pot}}\:,
\]
and the last claim follows by letting $t\to\infty$. Let us prove the equivalence of the three statements:

\emph{1. $\Longrightarrow$ 2.} Clear.

\emph{2. $\Longrightarrow$ 3}. Fix $R>0$ and rewrite the potential energy as $8 \pi E_{\mathrm{pot}}=I_1+I_2+I_3$, where
\begin{align*}
&I_1=\int\int_{|x-y|\leq1/R}\frac{\rho(t,x)\rho(t,y)}{|x-y|}\,dx\,dy\:,\\
&I_2=\int\int_{1/R<|x-y|\leq R}\frac{\rho(t,x)\rho(t,y)}{|x-y|}dx\,dy\:,\\
&I_3=\int\int_{|x-y|>R}\frac{\rho(t,x)\rho(t,y)}{|x-y|}dx\,dy\:.
\end{align*}
Using the Young inequality, see \cite{Lieb-Loss}, the first integral is bounded as
\[
I_1\leq C\|\rho(t)\|_{5/3}^2\left(\int_{|x|\leq R^{-1}}|x|^{-5/4}dx\right)^{4/5}\leq C R^{-7/5}\:;
\]
we recall that $\|\rho(t)\|_{5/3}\leq C$ for regular solutions of the Vlasov-Poisson system, see for instance~\cite{ReinLibro}. For $I_3$ we use that 
\[
I_3 \leq \frac{M^2}{R}\:.
\]
Finally
\[
I_2\leq R\int_{|x-y|\leq R}\rho(t,x)\rho(t,y)\,dx\,dy\leq MR\sup_{y\in\R^3}\int_{|x-y|\leq R}\rho(t,x)\,dx=R\,\varepsilon_R(t)\:,
\]
where, by~\eqref{mgs}, $\varepsilon_R(t)\to 0$, as $t\to\infty$, for all $R>0$. Collecting,
\[
8 \pi E_{\mathrm{pot}}\leq C\left( R^{-1}+R^{-7/5}\right)+R\,\varepsilon_R(t)\:.
\]
Taking the limit $t\to\infty$ and then $R\to\infty$ concludes the proof.

\emph{3. $\Longrightarrow$ 1.} We recall from \cite{IR} the following interpolation inequality
\begin{equation}
\label{lines_interpolation}
\|\rho\|_{5/3}^{5/3} \le C t^{-2} \left(\int_{\R^6} |x-tp|^2 f(t,x,p)\, dx\,dp \right)
\end{equation}
and the pseudoconformal law for the attractive case
\begin{equation}
\label{BPR}
\frac{d}{dt}\int_{\R^6} |x-tp|^2 f(t,x,p)\, dx\,dp =  \frac{d}{dt}\left(t^2 \|\nabla_x U(t) \|_2^2\right) - t \|\nabla_x U(t) \|_2^2. 
\end{equation}
Integrating (\ref{BPR}), we get
$$
\int_{\R^6}|x-t p|^2 f\,dx\,dp - \int_{\R^6} |x|^2 f^0\,dx\,dp = t^2 \|\nabla_x U(t) \|_2^2 - \int_0^t s \|\nabla_x U(s) \|_2^2 \ ds,
$$
so that
$$
   0 \le t^{-2} \int_{\R^6}|x-tp|^2 f \,dx\,dp \le t^{-2} \int_{\R^6}|x|^2 f^0\,dx\,dp + \|\nabla_x U(t) \|_2^2\:,
$$
and the r.h.s. converges to zero by hypothesis, which in combination with (\ref{lines_interpolation}) concludes the proof.
\end{proof}
%



In \cite{DSS} a sufficient condition for statistical dispersion was established, which we reprove here in a slightly different way. We shall use the identity
\begin{equation}\label{deltaxid}
M\frac{d^2}{dt^2}\langle(\Delta x)^2\rangle=2E+2E_\mathrm{kin}-2\frac{Q^2}{M}\:,
\end{equation}
which is proved by direct calculation.
\begin{Proposition}\label{partialsuf}
Solutions of Vlasov-Poisson which satisfy the condition $E>\frac{Q^2}{2M}$
are statistically dispersive. In particular, there exists constants $C_1,C_2>0$ such that, for all sufficientely large times, 
\begin{equation}\label{stadispEpos}
C_1t^2\leq\langle(\Delta x)^2\rangle\leq C_2t^2\:.
\end{equation}
\end{Proposition}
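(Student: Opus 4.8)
The plan is to integrate the second-order identity \eqref{deltaxid} twice in time, combining a lower bound for its right-hand side that comes straight from the hypothesis with an upper bound coming from a uniform-in-time bound on the kinetic energy.

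For the lower bound I would use that \eqref{lowerboundekin} gives $E_{\mathrm{kin}}(t)\ge Q^2/(2M)$ at every time, so that by \eqref{deltaxid}
\[
M\frac{d^2}{dt^2}\langle(\Delta x)^2\rangle \;=\; 2E+2E_{\mathrm{kin}}-\frac{2Q^2}{M}\;\ge\; 2\Big(E-\frac{Q^2}{2M}\Big)\;>\;0 .
\]
Writing $\delta:=\tfrac1M\big(E-\tfrac{Q^2}{2M}\big)>0$, this reads $\frac{d^2}{dt^2}\langle(\Delta x)^2\rangle\ge 2\delta$; integrating twice from $t=0$ yields $\langle(\Delta x)^2\rangle(t)\ge \langle(\Delta x)^2\rangle(0)+t\,\frac{d}{dt}\langle(\Delta x)^2\rangle(0)+\delta t^2$, whose quadratic term dominates for large $t$. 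This already gives $\sup_{t>0}\langle(\Delta x)^2\rangle=+\infty$, i.e.\ statistical dispersion, together with the lower bound in \eqref{stadispEpos} for any $C_1<\delta$.

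For the upper bound in \eqref{stadispEpos} it suffices, again by \eqref{deltaxid}, to show that $E_{\mathrm{kin}}(t)$ stays bounded uniformly in $t$; this is the only substantial point. I would obtain it from the classical a priori chain for the Newtonian kernel: the pointwise interpolation $\rho(t,x)\le C\|f(t)\|_\infty^{2/5}\big(\int|p|^2f\,dp\big)^{3/5}$ (split the $p$-integral at a radius and optimize) yields $\|\rho(t)\|_{5/3}^{5/3}\le C\,E_{\mathrm{kin}}(t)$, since $\|f(t)\|_\infty$ is conserved; then, writing $8\pi E_{\mathrm{pot}}=\iint|x-y|^{-1}\rho(t,x)\rho(t,y)\,dx\,dy$, the Hardy--Littlewood--Sobolev inequality bounds $E_{\mathrm{pot}}\le C\|\rho(t)\|_{6/5}^2$, and interpolating $\|\rho\|_{6/5}$ between $\|\rho\|_1=M$ and $\|\rho\|_{5/3}$ gives $E_{\mathrm{pot}}(t)\le C\,E_{\mathrm{kin}}(t)^{1/2}$ with $C$ depending only on the conserved quantities $M$ and $\|f\|_\infty$. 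Conservation of energy then forces $E_{\mathrm{kin}}=E+E_{\mathrm{pot}}\le E+C\,E_{\mathrm{kin}}^{1/2}$, a quadratic inequality in $E_{\mathrm{kin}}^{1/2}$ which implies $E_{\mathrm{kin}}(t)\le K$ for some $K$ independent of $t$. Plugging this into \eqref{deltaxid} gives $\frac{d^2}{dt^2}\langle(\Delta x)^2\rangle\le 2(E+K)/M$, and two integrations produce $\langle(\Delta x)^2\rangle(t)\le C_2t^2$ for all large $t$.

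The hard part is precisely the uniform bound on $E_{\mathrm{kin}}$; everything else is two integrations of \eqref{deltaxid}. The structural reason the kinetic-energy bound closes is that the exponent of $E_{\mathrm{kin}}$ in the estimate $E_{\mathrm{pot}}\lesssim E_{\mathrm{kin}}^{1/2}$ is strictly less than $1$, a feature special to the $1/|x|$ potential in three space dimensions; if one prefers, one can simply cite this well-known bound and omit its derivation. Note also that the hypothesis $E>Q^2/(2M)$ is only used for the lower estimate, the upper one holding for every regular solution.
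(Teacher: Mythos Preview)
Your proof is correct and follows essentially the same route as the paper: integrate \eqref{deltaxid} twice, using $E_{\mathrm{kin}}\ge Q^2/(2M)$ (equivalently $E_{\mathrm{pot}}\ge 0$) for the lower bound and the standard inequality $E_{\mathrm{pot}}\le C\,E_{\mathrm{kin}}^{1/2}$ to close a uniform bound on $E_{\mathrm{kin}}$ for the upper one. The only differences are cosmetic: the paper rewrites \eqref{deltaxid} as $4E-2Q^2/M+2E_{\mathrm{pot}}$ before dropping $E_{\mathrm{pot}}$ (yielding the slightly sharper constant $2\delta$ in place of your $\delta$), and it cites the bound $E_{\mathrm{pot}}\le C\sqrt{E_{\mathrm{kin}}}$ rather than sketching its derivation via interpolation and HLS as you do.
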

\begin{proof} First we rewrite~\eqref{deltaxid} as
\begin{equation}\label{deltaxid2}
M\frac{d^2}{dt^2}\langle(\Delta x)^2\rangle=4E-2\frac{Q^2}{M}+2E_{\rm pot}\:,
\end{equation}
Using that $E_{\rm pot}\geq 0$ and integrating in time twice we get
\[
\langle(\Delta x)^2\rangle(t)\geq \langle(\Delta x)^2\rangle(0)+\left[\frac{d}{dt}\langle(\Delta x)^2\rangle\right]_{t=0}t+\frac{2}{M}\left(E-\frac{Q^2}{2M}\right)t^2\:,
\]
where 
\begin{equation}\label{auxiliary1}
\left[\frac{d}{dt}\langle(\Delta x)^2\rangle\right]_{t=0}=\frac{2}{M}\int_{\R^6}x\cdot (p-M^{-1}Q)f^0(x,p)\,dx\,dp\:.
\end{equation}
The bound from below follows immediately. To prove the upper bound, we recall---see~\cite{DSS,ReinLibro} for instance---that the potential energy satisfies the bound
\[
E_\mathrm{pot}\leq C\sqrt{E_\mathrm{kin}}\:,
\]
where the positive constant $C$ depends only on $M=\|f(t)\|_1$ and $\|f(t)\|_\infty=\|f^0\|_\infty$. Thus
\[
E_\mathrm{kin}-C\sqrt{E_\mathrm{kin}}-E\leq 0\:,
\]
which in the case of non-negative total energy $E$ gives immediately a uniform upper bound on the kinetic energy:
\begin{equation}
E_\mathrm{kin}\leq \left(\frac{1}{2}\left(C+\sqrt{C^2+4E}\right)\right)^2
\end{equation}
and therefore the potential energy is uniformly bounded as well.
Using this in~\eqref{deltaxid} it follows that $\langle(\Delta x)^2\rangle\leq C_2t^2$ and the proof is complete.
\end{proof}
The threshold $Q^2/(2M)$ represents the kinetic energy of a point at the center of mass having the same mass of the whole system. The case $E=Q^2/(2M)$ is settled in the following proposition.
\begin{Proposition}
\label{partialsuf2}
Solutions of Vlasov-Poisson which satisfy $E=\frac{Q^2}{2 M}$ are statistically dispersive. 
\end{Proposition}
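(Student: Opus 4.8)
The identity \eqref{deltaxid2} is the natural starting point: in the borderline case $E=\tfrac{Q^2}{2M}$ the term $4E-2Q^2/M$ vanishes, so \eqref{deltaxid2} collapses to
\[
M\,\frac{d^2}{dt^2}\langle(\Delta x)^2\rangle = 2E_{\mathrm{pot}}(t)\:.
\]
Unlike in Proposition~\ref{partialsuf}, discarding $E_{\mathrm{pot}}\ge 0$ now only tells us that $\langle(\Delta x)^2\rangle$ is convex in $t$, which by itself does not preclude boundedness. So the real content is to rule out that $E_{\mathrm{pot}}$ becomes small; this is the conceptual crux of the proof. The plan is to argue by contradiction: suppose $\rho$ is not statistically dispersive, i.e.\ $\langle(\Delta x)^2\rangle(t)\le K$ for all $t>0$ and some constant $K$, and derive a contradiction by showing that a bounded moment of inertia about the center of mass forces the potential energy to stay bounded below by a positive constant.

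First I would record that the uniform bound on $\langle(\Delta x)^2\rangle$ keeps the mass concentrated. Since $M\langle(\Delta x)^2\rangle(t)=\int_{\R^3}|x-c_\rho(t)|^2\rho(t,x)\,dx$ and $c_\rho(t)=c_\rho(0)+M^{-1}Qt$, Chebyshev's inequality with $\lambda:=2\sqrt{K}$ gives, for every $t>0$,
\[
\int_{|x-c_\rho(t)|>\lambda}\rho(t,x)\,dx\ \le\ \frac{M\langle(\Delta x)^2\rangle(t)}{\lambda^2}\ \le\ \frac{M}{4}\:,\qquad\text{hence}\qquad \int_{|x-c_\rho(t)|\le\lambda}\rho(t,x)\,dx\ \ge\ \frac{3M}{4}\:.
\]

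Next I would convert this into a lower bound on $E_{\mathrm{pot}}$. Using $8\pi E_{\mathrm{pot}}(t)=\iint\rho(t,x)\rho(t,y)|x-y|^{-1}\,dx\,dy$ and retaining only the pairs with $x,y\in B_\lambda(c_\rho(t))$, for which $|x-y|<2\lambda$, one gets
\[
8\pi E_{\mathrm{pot}}(t)\ \ge\ \frac{1}{2\lambda}\Big(\int_{|x-c_\rho(t)|\le\lambda}\rho(t,x)\,dx\Big)^2\ \ge\ \frac{9M^2}{32\lambda}\:,
\]
so $E_{\mathrm{pot}}(t)\ge\delta_0>0$ for all $t$, with $\delta_0$ depending only on $M$ and $K$. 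Feeding this into $M\,\frac{d^2}{dt^2}\langle(\Delta x)^2\rangle=2E_{\mathrm{pot}}$ yields $\frac{d^2}{dt^2}\langle(\Delta x)^2\rangle\ge 2\delta_0/M$ for all $t$; integrating twice,
\[
\langle(\Delta x)^2\rangle(t)\ \ge\ \langle(\Delta x)^2\rangle(0)+\Big[\tfrac{d}{dt}\langle(\Delta x)^2\rangle\Big]_{t=0}\,t+\frac{\delta_0}{M}\,t^2\ \xrightarrow{t\to\infty}\ +\infty\:,
\]
contradicting the assumed bound. Hence $\sup_{t>0}\langle(\Delta x)^2\rangle=+\infty$, which is the claim.

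I do not anticipate a technical obstacle of the usual kind; the argument is short and uses only \eqref{deltaxid2}, the Newtonian expression for $E_{\mathrm{pot}}$, and Chebyshev's inequality. The one point requiring care is that the concentration estimate must be taken around the (uniformly translating) center of mass $c_\rho(t)$, so that the second moment appearing in it is exactly $M\langle(\Delta x)^2\rangle$; equivalently, one may first normalize $Q=0$ by a Galilean transformation—under which both the hypothesis $E=Q^2/(2M)$ and the conclusion are invariant—and then run the whole argument with $\int_{\R^3}|x|^2\rho\,dx$ in place of $M\langle(\Delta x)^2\rangle$.
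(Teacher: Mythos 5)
Your argument is correct, but it is genuinely different from the one in the paper. You run a self-contained contradiction argument: if $\sup_t\langle(\Delta x)^2\rangle\le K$, then Chebyshev around $c_\rho(t)$ keeps at least $3M/4$ of the mass in a ball of radius $2\sqrt{K}$, the Newtonian double-integral representation $8\pi E_{\mathrm{pot}}=\iint\rho(t,x)\rho(t,y)|x-y|^{-1}dx\,dy$ then gives a uniform lower bound $E_{\mathrm{pot}}\ge\delta_0>0$, and feeding this into \eqref{deltaxid2} (whose constant term vanishes exactly when $E=Q^2/2M$) forces quadratic growth of $\langle(\Delta x)^2\rangle$ --- a contradiction. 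All steps check out, including the normalization $E_{\mathrm{pot}}=\frac{1}{8\pi}\iint\rho\rho|x-y|^{-1}$ used elsewhere in the paper. The paper instead normalizes $Q=0$, integrates the first-order identity $\frac{M}{2}\frac{d}{dt}\langle(\Delta x)^2\rangle=\int(x\cdot p)f^0\,dxdp+\int_0^tE_{\mathrm{kin}}(\tau)d\tau$, and splits into cases according to whether $\int_0^\infty E_{\mathrm{kin}}$ eventually dominates the (possibly negative) initial term; in the remaining case $\int_0^\infty E_{\mathrm{kin}}<\infty$ it extracts a sequence $t_n$ with $E_{\mathrm{kin}}(t_n)\to0$, hence $E_{\mathrm{pot}}(t_n)\to0$, and invokes the machinery of Proposition~\ref{fullnec} (the interpolation inequality \eqref{lines_interpolation} and the pseudoconformal law \eqref{BPR}) together with the concentration argument of Proposition~\ref{statweak}. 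Your route is more elementary --- it avoids \eqref{lines_interpolation}--\eqref{BPR} entirely and in fact covers $E\ge Q^2/2M$ uniformly, subsuming the qualitative part of Proposition~\ref{partialsuf} --- while the paper's route yields extra quantitative information in two of its three cases (linear growth $\langle(\Delta x)^2\rangle\ge Ct$) and exhibits the link between the borderline energy case and the vanishing of the potential energy along a sequence of times. Being a pure contradiction argument, yours gives $\sup_{t>0}\langle(\Delta x)^2\rangle=+\infty$ with no rate, but that is exactly what the definition of statistical dispersion requires.
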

\begin{proof}
After a Galilean transformation we may assume $Q=0$. In this frame, the solutions under consideration have zero energy. Thus we compute
$$
\frac{M}{2}  \frac{d}{dt} \langle(\Delta x)^2\rangle = \int_\mathrm{\R^6} (x\cdot p)f^0 \, dxdp +\int_0^t E_\mathrm{kin}(\tau)\, d\tau.
$$
The claim is obvious if $\int (x\cdot p)f^0 \, dxdp\geq 0$. More exactly, in the latter case we have $\langle(\Delta x)^2\rangle \geq C\,t$, for a positive constant $C$. Otherwise there exist two possibilities:
\begin{enumerate}

\item There exists $t^*>0$ such that $- \int (x\cdot p)f^0 \, dxdp =  \int_0^{t^*} E_\mathrm{kin}(\tau)\, d\tau$.  In this case too $\langle(\Delta x)^2\rangle \geq C\,t$.

\item There holds the bound $ \int_0^\infty E_\mathrm{kin}(\tau)\, d\tau\leq  -\int (x\cdot v)f^0 \, dxdv$. 
Hence there exists a sequence of times $t_n\xrightarrow{n}\infty$ such that $E_\mathrm{kin}(t_n)\xrightarrow{n} 0$, and thus $E_{\mathrm{pot}}(t_n)\xrightarrow{n} 0$ as well. From the proof of the implication $3\Rightarrow 1$ in Proposition \ref{fullnec}, we see that $\|\rho(t_n)\|_{5/3}\xrightarrow{n}0$. In particular
\[
\lim_{n\to\infty}\sup_{x_0\in\R^3}\int_{|x-x_0|\leq R}\rho(t_n,x)\,dx=0\:,\quad\forall\,R>0
\]
and repeating the argument at the beginning of the proof of Proposition~\ref{statweak}, we conclude that  $\langle(\Delta x)^2\rangle(t_n)\xrightarrow{n}+\infty$.
\end{enumerate}
\end{proof}

\begin{Example}\label{shell-static-neg}\textnormal{At this point it is interesting to reconsider the example of the shell surrounding a static configuration introduced at the end of Section~\ref{dispdef}. Say that the inner part of the solution has mass $M_0$ and that the surrounding shell has mass $m$ and initial inner radius $R_1$. Our previous computations show that the escape threshold associated with this configuration is 
$$
 \sqrt{\frac{M_0+m}{2 \pi R_1}}\:,
$$
see~\eqref{initialw},
so that any particle with initial radial momentum greater than this threshold will escape towards infinity. We shall now prove that it is possible to obtain an escaping shell even when the total energy of the system is negative. Note that the total energy $E$ consists of the energy $E_0$ of the interior part plus the kinetic energy of the shell minus the potential energy of the shell (the   interaction energy has negative sign and as a consequence  the term contributing to the potential energy can be not consired). Neglecting the last negative term and estimating above the kinetic energy of the shell we get
$$
E< E_0 + \frac{1}{2}m \sup_{\mathrm{shell}} |p|^2\:,
$$
where the supremum is taken in the support of the shell at time $t=0$.
The interior part has energy $E_0<0$, since it is static (cf. Proposition \ref{generic}); thus in order to have the whole shell escaping to infinity while the total energy of the system remains negative we must choose the initial radial momenta for all the particles in the shell according to 
$$
   \frac{M_0+m}{2 \pi R_1} < \inf_{\mathrm{shell}} w^2 < \inf_{\mathrm{shell}} |p|^2 < \sup_{\mathrm{shell}} |p|^2 < \frac{-2 E_0}{m}.
$$
This can be done if $m$ is strictly contained in the interval between zero and the value $\frac{1}{2}[-M_0 + \sqrt{M_0^2 - 16 \pi E_0R_1}]$. For bigger values of $m$ we have no guarantee that the total energy can be kept negative.}
\end{Example} 
The previous example shows us that:

\begin{itemize}

\item There are solutions that are partially (therefore statistically) dispersive with $E<Q^2/2M$, so that there is no simple way to extend the results of Propositions \ref{partialsuf} and \ref{partialsuf2}. By Proposition~\ref{fullnec}, these solutions cannot be totally dispersive.

\item We know that spherically symmetric solutions with $E>0$ disperse statistically with a dispersion rate of $t^2$ (equivalently their spatial support spreads with a velocity of order $t$). We will see in Section~\ref{kurthsec} that for solutions with $E=0$ this needs not to be the case. On the other hand the previous example shows that there are solutions with $E\le 0$ which also statistically disperse with a rate $t^2$. So there is no evident relation between the admisible rates of dispersion and the sign of the energy.


\item There is no lower limit for the fraction of total mass of the system that is lost to infinity for a partially dispersive system. 

\item Using these ideas we can also show that there exist dispersive  
solutions as ``close" as desired to a stable steady state. More  
precisely, given a polytropic steady state  
(see \cite{SS} for details and notation) with mass $M$,  
polytropic index $\mu$ and $L^{1+ 1/\mu}$-norm $J$, which is stable in  
the sense of \cite{SS}, we can find solutions as described above that are partially dispersive and that remain in the stability region of the polytrope.  (This is not  a  
contradiction since the quantity of mass that is lost to infinity is almost  
negligible.) These solutions correspond to initial data of the following type. Starting form the given polytrope we construct---by scaling---a second polytrope 
having mass $M-m$ and   
$L^{1+ 1/\mu}$-norm $J-j$, for $m,j$ positive and small. Then we add an outer shell of mass $m$ and $L^{1+1/ \mu}$-norm less or equal to  
$j$. We choose $m,j$ in order that all the computations  
in Example \ref{shell-static-neg}  remain valid and that the total energy of the solution is as close  
as desired to the energy of the original polytrope. Then we can invoke the stability  
criterium in \cite{SS}. 
\end{itemize}

\subsection{Kurth's solution}\label{kurthsec}
We conclude this section by considering an explicit class of  spherically symmetric solutions to the Vlasov-Poisson system found by Kurth in \cite{Kurth} (see also \cite{Horst84}, \cite{ReinLibro}). The main idea of this model is to find a solution whose associated density is of the form
\begin{equation}\label{rhokurth}
\rho(t,x)= (4 \pi /3)^{-1} \phi(t)^{-3}\chi_{\{|x| < \phi(t)\}}
\end{equation}
where the function $\phi$ is interpreted as the radius of the system. This function $\phi$ solves the following ODE
$$
   \phi^3 \phi'' + \phi =1\:,
 $$
 subject to the initial condition  $\phi(0)=1\:$.
We get solutions to the Vlasov-Poisson system exhibiting different types of behavior depending on the prescribed value of $\phi'(0)$:
\begin{itemize}
\item If $\phi'(0)=0$, the solution is static;
\item If $0<|\phi'(0)|<1$, the solution is periodic in time;
\item If $|\phi'(0)|\ge 1$, the solution is strongly dispersive.
\end{itemize}


Let us relate the above classification in terms of $\phi'(0)$ with the values of the energy $E$. Note first that the associated distribution function $f$ can be chosen to be spherically symmetric so that $Q=0$. Doing so, the energy of Kurth's solutions is given by, see \cite{Horst84}, 
\[
E=\frac{3}{5}\left((\phi')^2+\phi^{-2}-2\phi^{-1}\right)=\frac{3}{5}(\phi'(0)^2-1)\:.
\]
Moreover $M=1$, which follows integrating~\eqref{rhokurth}. Thus
\begin{itemize}
\item If $E=-3/5$ ($\Leftrightarrow\phi'(0)=0$), the solution is static (this steady state is studied in \cite{BFH}).
\item If $-3/5 < E <0$ ($\Leftrightarrow 0<|\phi'(0)|<1$), the solution is time periodic. 
\item If $E\ge 0$  ($\Leftrightarrow |\phi'(0)|\geq 1$), the radius of the system goes to infinity and, by~\eqref{rhokurth}, $\rho\to 0$ in $L^q$, for all $q>1$, {\it i.e.}, the solution is strongly (and therefore also totally) dispersive. When $E=0$ ($\Leftrightarrow |\phi'(0)|=1$), we have $\langle (\Delta x)^2\rangle \sim t^{4/3}$. When $E >0$ ($\Leftrightarrow |\phi'(0)|>1$), we get $\langle (\Delta x)^2\rangle \sim t^2$, in agreement with Proposition~\ref{partialsuf}.
\end{itemize}
Let us prove the latter claim. First we show that $\langle (\Delta x)^2\rangle \sim O(\phi(t)^2)$. Since the solution under study is spherically symmetric, we have
$$
  \langle (\Delta x)^2\rangle = \int_{\R^3} |x|^2 \rho(x) \, dx = 4 \pi \left(\frac{4 \pi}{3}\right)^{-1} \phi(t)^{-3} \int_0^{\phi(t)} r^4 \, dr = \frac{3}{5} \phi(t)^2.
$$
Thus we reduce the problem to find out the large time behavior of the function $\phi$. Following Kurth~\cite{Kurth}, if $|\phi'(0)|=1$ we have that 
$$
\phi(t) = \frac{1}{2} (1 + v(t)^2)\:,
$$ where $v(t)$ solves
$$
   v(t) + \frac{1}{3} v(t)^3 = 2\left(t + \frac{2}{3} \right).
$$
For $t$ big the term $v(t)^3$ dominates and then $\phi(t) \sim O(t^{2/3})$.
If $|\phi'(0)|>1$, we have that 
$$
\phi(t) = \frac{|\phi'(0)| \mbox{ch}\, (v(t)) - 1}{|\phi'(0)|^2 -1}\:,
$$ where $v(t)$ solves
$$
   v(t) - |\phi'(0)| \mbox{sh}\, (v(t)) = (|\phi'(0)| -1)^{3/2} (t-t_0)
$$
($t_0$ depends on $|\phi'(0)|$). For $t$ big enough $|\mbox{sh}\, (v(t))|$ dominates $|v(t)|$, and we infer that $|v(t)| \sim O(\mbox{log}\, t)$, which entails $\phi(t) \sim O(t)$.

Note that the solution with $E=0$ (there are two of them actually) is the only known example of statistically dispersive solution for which statistical dispersion takes place at a slower rate than $t^2$ and for which the support spreads more slowly than $t$. Also the rate for strong dispersion is slower than for the other examples considered in this paper. Taking a closer look to the trajectories reveals that these are strongly oscillatory (like in a forced harmonic oscillator). 

This special solution highlights also the role of condition $d)$ in Lemma \ref{dichotomy}. For we can surround this ``slowly dispersing'' solution with a strongly dispersive shell configuration, and the resulting solution verifies $a),\ b)$ and $c)$ but not $d)$, and happens to be totally but not partially dispersive, see previous Remark \ref{rem2}.

\section{Remarks on non-dispersive solutions}\label{nondispersive}

In this section we briefly consider non-dispersive solutions of the Vlasov-Poisson system. In particular, we consider steady state solutions and time periodic (breather) solutions.  

We distinguish between two types of steady states: Static solutions, which are defined as time-independent solutions of the Vlasov-Poisson systen~\eqref{rhoVP}--\eqref{vlasovVP}, and travelling steady states, which are defined as solutions $f(t,x,p)$ such that there exists a Galilean transformation $\mathcal{G}_u$ such that $f\circ\mathcal{G}_u$ is a time independent solution of the Vlasov-Poisson system ({\it i.e.}, a static solution). For static solutions one has $Q=0$, wheras $Q\neq 0$ for travelling steady states. In fact, it is easy to show that the Galilean transformation which transforms a travelling steady state with linear momentum $Q$ and mass $M$ to a static solution (with linear momentum $Q=0$ and mass $M$) is the transformation $\mathcal{G}_u$, where
\[
u=\frac{Q}{M}\:.
\]

The construction of static solutions for the Vlasov-Poisson system can be done in two ways. Firstly, by choosing a particle density $f$ which depends only on quantities that are conserved along the characteristics of the time independent Vlasov equation; with this choice the Vlasov equation is automatically satisfied and the problem is reduced to that of proving an existence theorem for the non-linear elliptic equation obtained by replacing $f$ in the Poisson equation.  So far this `direct' method was used mostly in the spherically symmetric case (see however~\cite{Schulze}), where by the Jeans Theorem~\cite{BFH} all solutions of the time independent Vlasov equation can be expressed in terms of the particles energy and angular momentum. A second method to construct steady states of the Vlasov-Poisson system is by minimizing the energy (or a related) functional subject to suitable constraints (the choice of the functional and/or the constraints selects the type of steady state to be constructed). The advantage of the variational method on the direct method is that the former automatically proves a stability property for the steady state. We refer~\cite{Guo03, Rein2000,Rein2000bis, SS, ReinLibro} and references therein for several works on the construction and stability of steady states of the Vlasov-Poisson system.

To make a connection with the results proved in the previous sections for dispersive solutions, let us note the following simple bounds on the energy of steady states and time periodic solutions:
\begin{Proposition}
\label{generic}
Static solutions of the Vlasov-Poisson system satisfy $E<0$, whereas travelling steady states satisfy $E<\frac{Q^2}{2M}$.
Time periodic solutions of the Vlasov-Poisson system satisfy $E<-\frac{Q^2}{2M}$. 
\end{Proposition}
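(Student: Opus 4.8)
The plan is to treat the three cases separately, the common tool being the virial (Jacobi) identity~\eqref{deltaxid}, equivalently~\eqref{deltaxid2}, together with the elementary remark that for a non-trivial regular solution $E_{\mathrm{pot}}(t)=\tfrac12\int_{\R^3}|\nabla_x U(t)|^2\,dx>0$ at every time: indeed $E_{\mathrm{pot}}(t)=0$ would force $\nabla_x U(t)\equiv 0$ and hence $\rho(t)=\Delta_x U(t)\equiv 0$, contradicting $\|\rho(t)\|_1=M>0$. For a static solution, $\langle(\Delta x)^2\rangle$ does not depend on $t$ and $Q=0$, so the left-hand side of~\eqref{deltaxid2} vanishes and we obtain $0=4E+2E_{\mathrm{pot}}$, i.e. $E=-\tfrac12 E_{\mathrm{pot}}<0$.

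For a travelling steady state, by definition there is a Galilean transformation~\eqref{galileo} with $u=Q/M$ such that $f\circ\mathcal{G}_u$ is a (non-trivial, regular) static solution, hence has energy $<0$ by the previous step. It only remains to compute how the energy changes under the transformation: writing $p=M^{-1}Q+(p-M^{-1}Q)$ in the definition of $E_{\mathrm{kin}}$ and using $\int_{\R^6}(p-M^{-1}Q)f\,dp\,dx=0$ one gets $E_{\mathrm{kin}}=\tfrac{Q^2}{2M}+\tfrac12\int_{\R^6}|p-M^{-1}Q|^2f\,dp\,dx$, cf.~\eqref{lowerboundekin}, while $E_{\mathrm{pot}}$ is Galilean invariant since $\rho$ is merely translated; hence the energy of $f\circ\mathcal{G}_u$ equals $E-\tfrac{Q^2}{2M}$, and its negativity gives $E<\tfrac{Q^2}{2M}$.

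For a time periodic solution I would first show $Q=0$. Periodicity of $f$ makes $\rho$, and thus the center of mass~\eqref{centerofmass}, periodic; on the other hand, integrating the Vlasov equation in $p$ yields the continuity equation $\partial_t\rho+\nabla_x\cdot\!\int_{\R^3}pf\,dp=0$, whence $\dot c_\rho(t)=M^{-1}\int_{\R^6}pf\,dp\,dx=Q/M$ is constant; since a non-constant linear function is not periodic, $Q=0$ and the asserted bound reads $E<0$. To prove it, note that $\langle(\Delta x)^2\rangle$ is periodic, hence bounded; were $E\geq 0$, then~\eqref{deltaxid2} with $Q=0$ would give $M\frac{d^2}{dt^2}\langle(\Delta x)^2\rangle=4E+2E_{\mathrm{pot}}(t)\geq 2\delta$ with $\delta:=\min_t E_{\mathrm{pot}}(t)>0$ (a minimum attained and strictly positive because $E_{\mathrm{pot}}$ is continuous, positive and periodic), and integrating twice would force $\langle(\Delta x)^2\rangle(t)$ to grow at least like $\delta t^2/M$, a contradiction. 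Hence $E<0=-Q^2/(2M)$.

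I expect the only genuinely delicate point to be the periodic case, and within it the identity $Q=0$: one must read ``time periodic'' in a frame-intrinsic sense (so that the center of mass really returns to itself), since for a solution periodic only up to a translation the same computation would merely give $E<Q^2/(2M)$, the bound already established for travelling steady states. The remaining ingredients---the virial identity~\eqref{deltaxid}, granted in the excerpt, and the strict positivity of $E_{\mathrm{pot}}$ for non-trivial solutions---are routine.
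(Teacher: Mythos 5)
Your proof is correct. For static solutions and travelling steady states it is essentially the paper's argument in a mildly different dress: the paper works with the first-order dilation identity $\frac{d}{dt}\int_{\R^6}x\cdot p\,f\,dp\,dx=E+E_{\mathrm{kin}}$, deduces the virial relation $E=-E_{\mathrm{kin}}$ (respectively $E-M^{-1}Q^2=-E_{\mathrm{kin}}$ after the boost $\mathcal{G}_{Q/M}$) and concludes via~\eqref{lowerboundekin}, whereas you use the second-order identity~\eqref{deltaxid} for the moment of inertia; the two differ only by a time derivative and the trivial motion of the center of mass, and your bookkeeping of how $E_{\mathrm{kin}}$ transforms under the boost is exactly what~\eqref{lowerboundekin} encodes (your use of $E_{\mathrm{pot}}>0$ in place of $E_{\mathrm{kin}}>0$ for strictness is equivalent through the virial relation). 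The genuine divergence is in the periodic case. The paper integrates the dilation identity over one period to get $0=ET+\int_0^T E_{\mathrm{kin}}\,dt$ and applies the pointwise bound $E_{\mathrm{kin}}>Q^2/(2M)$ from~\eqref{lowerboundekin}, obtaining $E<-Q^2/(2M)$ in one line with no discussion of $Q$. You instead first prove $Q=0$ for genuinely periodic solutions---a correct and worthwhile observation, not made explicit in the paper, which shows that the stated bound really amounts to $E<0$ and which also explains why the paper's period-averaging is legitimate (the function $\int x\cdot p\,f$ would not be periodic for a solution that is periodic only up to a boost)---and then argue by contradiction, using the uniform positive lower bound on $E_{\mathrm{pot}}$ over a period in~\eqref{deltaxid2} to force quadratic growth of $\langle(\Delta x)^2\rangle$. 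Both routes are valid; the paper's is shorter, while yours isolates exactly where strict positivity enters and clarifies the frame-dependence issue you flag at the end.
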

\begin{proof}
The key to prove the result is the dilation identity:
\begin{equation}
\label{dilation_identity}
\frac{d}{dt} \int_{\R^6}x\cdot p\, f\,dp\,dx=E+E_{\mathrm{kin}}\:,
\end{equation}
which follows by direct computation.
If $f$ is a static solution, then the previous identity implies the {\it virial} relation $E=-E_{\mathrm{kin}}$,  which yields the claim for static solutions. For travelling steady states, a Galilean transformation with $u=(M)^{-1}Q$ yields the virial identity in the form $E-(M)^{-1}Q^2=-E_{\mathrm{kin}}$, which, using~\eqref{lowerboundekin}, implies the claim of the proposition for travelling steady states. In the case of periodic solutions, we integrate the dilation identity
over a period $T$ to get
\[
0=E T+\int_0^TE_{\mathrm{kin}} dt\:.
\]
Using again~\eqref{lowerboundekin} concludes the proof.
\end{proof}
%
%
\begin{Remark}\textnormal{
In the case of static solutions of the Nordstr\"om-Vlasov system, the inequality $E<0$ is replaced by $E<M$, see~\cite{CSS}. For the (spherically symmetric) Einstein-Vlasov system, one can obtain an inequality for static solutions which involves not only the energy and the mass but also the central redshift~\cite{unp}.}
\end{Remark}

\subsection{Virialized  solutions}\label{virializedsols}
It is a classical result in Astrophysics that bounded systems of self-gravitating particles roughly in equilibrium verify that the time average of the kinetic energy of the ensemble equals twice the time average of its potential energy. This statement and some of its variants and particularizations go under the name of ``virial theorems'' (cf. \cite{Pollard,Saslaw,Binney}, for instance), and are common tools in Astrophysics. We shall comment here on the connection between the notion of {\it virialized} solutions of the Vlasov-Poisson system and our preceding results.
 
In this section we shall only consider solutions such that $Q=0$, {\it i.e.}, the reference frame is chosen at rest with respect to the center of mass of the system, as we did in the proof of Proposition \ref{partialsuf2}. Following \cite{Pollard} we shall say that a solution of the Vlasov-Poisson system is virialized if and only if
\begin{equation}
\label{virial_limit}
  \lim_{t\to \infty} \frac{\int_0^t E + E_\mathrm{kin}(\tau) \, d\tau}{t} =0\:.
\end{equation}
Note that solutions with $E>0$ cannot be virialized in this sense. In fact, it is a straightforward consequence of the inequality~\eqref{lowerboundekin} that virialized solutions of the Vlasov-Poisson system must necessarily satisfy
\[
E\leq 0\:.
\]
Examples of virialized solutions are Kurth's solutions with energy $E\leq 0$. 
\begin{Remark}\textnormal{
The notion of virialized system is usually applied in Astrophysics only in the case of $N$-body bounded systems, but, as pointed out in \cite{Pollard}, strict boundedness is not necessary (Kurth's solution with energy $E=0$ shows that also in the case of Vlasov-Poisson, the support of a virialized solution can spread out to infinity). This remark led to interpret the virialized systems as  apparently bounded systems, {\it i.e.}, systems that disperse so slowly that in our time scale %
they appear to be bounded and in equilibrium.}
\end{Remark} 
The following proposition extends the result in \cite{Pollard}, which is valid for $N$-body systems, to the continuos setting; note that the diameter of the $N$-body system used in ~\cite{Pollard}, is replaced by the statistical dispersion operator $\langle(\Delta x)^2\rangle$.

\begin{Lemma}
\label{virialequivalence}
 Let $f(t)$ be a given solution of the Vlasov-Poisson system with $Q=0$. Then the following statements hold true:
 \begin{enumerate}
 
 \item If $f(t)$ is virialized then $\lim_{t \to \infty} \frac{\langle(\Delta x)^2\rangle}{t^2}=0$.
 
\item If $\lim_{t \to \infty} \frac{\langle(\Delta x)^2\rangle}{t^2}=0$ and $\lim_{t\to \infty} \frac{\int_0^t E + E_\mathrm{kin}(\tau) \, d\tau}{t}$ exits, then $f(t)$ is virialized.

 \end{enumerate}
\end{Lemma}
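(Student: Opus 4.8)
The plan is to reduce both assertions to a single twice-integrated form of the second-order dilation identity~\eqref{deltaxid}. Since $Q=0$ throughout, \eqref{deltaxid} reads $M\frac{d^2}{dt^2}\langle(\Delta x)^2\rangle=2(E+E_{\mathrm{kin}}(t))$. I would integrate this once in time, using~\eqref{auxiliary1} (with $Q=0$) to fix the integration constant, and then integrate a second time, to obtain the identity
\[
\frac{M}{2}\,\langle(\Delta x)^2\rangle(t)=\frac{M}{2}\,\langle(\Delta x)^2\rangle(0)+a\,t+\int_0^t\Phi(s)\,ds\:,
\]
where $a:=\int_{\R^6}x\cdot p\,f^0\,dx\,dp$ and $\Phi(t):=\int_0^t(E+E_{\mathrm{kin}}(\tau))\,d\tau$. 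By the definition~\eqref{virial_limit}, $f$ is virialized exactly when $\Phi(t)/t\to 0$, so everything now comes down to comparing the growth of $\langle(\Delta x)^2\rangle$, of $\int_0^t\Phi$, and of $\Phi$ itself.

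For statement~1, suppose $f$ is virialized, i.e. $\Phi(t)/t\to 0$. Since $t^2\to\infty$ and the ratio of the derivatives, $\Phi(t)/(2t)$, tends to $0$, L'H\^opital's rule gives $t^{-2}\int_0^t\Phi(s)\,ds\to 0$; dividing the displayed identity by $t^2$ and letting $t\to\infty$ then yields $\langle(\Delta x)^2\rangle(t)/t^2\to 0$. (One may also bypass L'H\^opital and argue directly, by splitting $\int_0^t\Phi=\int_0^S\Phi+\int_S^t\Phi$ and using $|\Phi(s)|\le\varepsilon s$ for $s\ge S$.)

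For statement~2, divide the same identity by $t^2$; the hypothesis $\langle(\Delta x)^2\rangle(t)/t^2\to 0$, together with $a\,t/t^2\to 0$ and $\langle(\Delta x)^2\rangle(0)/t^2\to 0$, forces $t^{-2}\int_0^t\Phi(s)\,ds\to 0$. On the other hand, write $\ell:=\lim_{t\to\infty}\Phi(t)/t$, which is assumed to exist; then the ratio of the derivatives, $\Phi(t)/(2t)$, converges to $\ell/2$, so L'H\^opital's rule applies and gives $t^{-2}\int_0^t\Phi(s)\,ds\to\ell/2$. Comparing the two limits forces $\ell=0$, i.e. $f$ is virialized.

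The calculations are entirely routine; the one place that needs care is the appeal to L'H\^opital's rule in statement~2, where the prior existence of $\lim_{t\to\infty}\Phi(t)/t$ is indispensable --- without it, knowing only that $t^{-2}\int_0^t\Phi$ converges does not allow one to recover convergence of $\Phi(t)/t$. This is precisely why statement~2 carries that extra hypothesis and is not merely the converse of statement~1.
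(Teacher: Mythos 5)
Your proposal is correct and follows essentially the same route as the paper, which simply invokes the identity~\eqref{deltaxid} (with $Q=0$) together with L'H\^opital's rule in Rudin's form; your twice-integrated identity and the careful remark about why the extra hypothesis in statement~2 is needed merely spell out the details the paper leaves implicit.
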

\begin{proof}
The proof is a straightforward application of L'H\^opital's rule  (as formulated in \cite{Rudin}) and  of\eqref{deltaxid}. 
\end{proof}
\begin{Remark}\textnormal{The existence of the limit in (\ref{virial_limit}) can be guaranteed under some of the most frequent situations: Spherical symmetry \cite{Batt}, periodicity in time (including static solutions of course) and whenever $E_{\mathrm{kin}}(t)$ has a limit as $t\to \infty$.}
\end{Remark}

\section{Summary and open problems}
The main results of this paper are summarized in Table~\ref{table}. Completing the entries marked with a question mark would lead to a considerable extension of the result presented here and of our understanding of the large time behavior of the Vlasov-Poisson system in the gravitational case. 
\begin{table}
\begin{center}
\begin{tabular}{|c|ccc|}
\hline  & & & \\[-2ex]
{\bf Dispersive behavior} & {\bf Necessary}  & {\bf Sufficient} & {\bf Example}  \\ \hline & & & \\[-2ex] 
Strong Dispersion & $E\geq Q^2/2M$ & ? & Example~\ref{shellexample} \\
{\tiny (VP)}${\uparrow}$ $\downarrow$ & & &\\
Total Dispersion & $E\geq Q^2/2M$ & ? & Kurth's $E\geq 0$ \\
$\downarrow$ & & &\\
Statistical Dispersion &  ? & $E\geq Q^2/2M$ &  \\
$\uparrow$ & & &\\
Partial Dispersion &  ? & ? & Example~\ref{shell-static} \\[1ex]
{\bf Other solutions} & && \\[1ex] 
Static Solutions &  $E<0$ & -- & Kurth's $E=-\frac{3}{5}$ \\
Periodic Solutions&  $E<-Q^2/2M$ &--& Kurth's $E\in (-\frac{3}{5}, 0)$\\
Virialized Solutions& $E\leq 0$ &--& Kurth's $E=0$  \\ 
\hline
\end{tabular}
\caption{Main results proved in the paper and open problems.}
\label{table}
\end{center}
\end{table}


\end{document}